\documentclass[11pt,reqno]{amsart}
\usepackage[margin=3.6cm]{geometry}
\usepackage{graphicx}
\usepackage{amssymb}
\usepackage{bbm}

\newcommand{\R}{{\mathbb R}}

\newcommand{\C}{{\mathbb C}}

\newcommand{\ep}{\varepsilon}

\newtheorem{theorem}{Theorem}
\newtheorem{corollary}[theorem]{Corollary}
\newtheorem{proposition}[theorem]{Proposition}

\theoremstyle{definition}

\newtheorem{remark}{Remark}

\begin{document}

\title{On the topology of the zero sets of monochromatic random waves}

\author[Y. Canzani]{Yaiza Canzani}
\author[P. Sarnak]{Peter Sarnak}

\address[Y. Canzani]{ Institute for Advanced Study and  Harvard University.\medskip}
 \email{canzani@math.ias.edu}
\address[P. Sarnak]{ Institute for Advanced Study and  Princeton University.\medskip}
\email{sarnak@math.ias.edu}

\thanks{The research of Y.C. is partially supported by an NSERC Postdoctoral Fellowship and by NSF grant DMS-1128155.   The research of P.S. is supported by an NSF grant.}

\begin{abstract}
This note concerns the topology of the connected components of the zero sets of monochromatic random waves on compact Riemannian manifolds without boundary.
In \cite{SW} it is shown that these are distributed according to a universal measure on the space of smooth topological types. We determine the support of this measure.
\end{abstract}

\maketitle


\section{Introduction}

For $\ell \geq 0$ and $n \geq 2$ let $\mathcal E_\ell(S^n)$ denote the real linear space of (homogeneous) spherical harmonics of degree $\ell$ in $(n+1)$ variables. These are eigenfunctions of the Laplacian $\Delta_{S^{n}}$ on the sphere $S^n$ endowed with the round metric sitting in $\R^{n+1}$. With the corresponding $L^2$-inner product
 \[ \langle f, g \rangle = \int_{S^n} f(w) g(w) \, d\sigma(w),\]
 we get a Gaussian probability density $\mathbb P$ on $\mathcal E_\ell(S^n)$. Namely, 
 \[\mathbb P(A) = \int_A e^{- \langle f,f \rangle} df,\]
 where $df$ is the normalized Haar measure on  $\mathcal E_\ell(S^n)$ and $A \subset \mathcal E_\ell(S^n)$.
 
We are interested in the zero set $V(f)=f^{-1}(0)$ of a typical $f \in (\mathcal E_\ell(S^n), \mathbb P)$ as $\ell \to \infty$. Let $C(f)$ denote the connected components of $V(f)$. Then, for almost all $f$ these components are smooth, compact, $(n-1)$-dimensional manifolds.  The distribution of topologies of $V(f)$ is given by 
\begin{equation}\label{E: mu_f}
\mu_f:= \frac{1}{|C(f)|} \sum_{c \in C(f)} \delta_{t(c)},
\end{equation}
where $t(c)$ is the diffeomorphism type of $c$, and $\delta_\tau$ is the point measure at $\tau$. If we denote these types by $\widetilde{ H}(n-1)$  (it is a countable discrete set), then clearly $\mu_f$ is a probability measure on  $\widetilde{ H}(n-1)$. Let $H(n-1)$ denote the subset of $\widetilde{ H}(n-1)$  consisting of all those types that can be realized as embedded submanifolds of $\R^n$.

Nazarov and Sodin \cite{NS} have shown that for a typical $f\in \mathcal E_\ell(S^n)$ and $\ell \to +\infty$, $$|C(f)| \sim c_n\, \ell^n,$$ for some $c_n>0$. Since $C(f)$ consists of  many components,  it makes sense to examine the behavior of $\mu_f$ as $\ell \to +\infty$. In the recent work \cite{SW} it is shown that there is a probability measure $\mu_{\text{mono}}$ on $H(n-1)$ (denoted by $\mu_{C,n, 1}$ in \cite{SW})  such that for any $\ep>0$
\begin{equation}\label{E: prob discrepancy}
\mathbb P \left\{ f \in \mathcal E_\ell(S^n): \, D(\mu_f , \mu_{\text{mono}}) >\ep \right\} \to 0, 
\end{equation}
as $\ell \to +\infty$.  Here the discrepancy $D$ between two measures $\mu$ and $\nu$  is given by 
\begin{equation}\label{E: discrepancy}
D(\mu, \nu) =\sup \left\{ |\mu(F) - \nu(F)|:\;\; F \subset { \widetilde{ H}(n-1)}, \, F\,\, \text{is finite}\right\}.
\end{equation}

In fact, the same is proved for Gaussian ensembles of monochromatic waves on any given compact Riemannian manifold $(M,g)$ of dimension $n$ with no boundary (\cite{SW}). 
If $(M,g)$ is a compact Riemannian manifold, a monochromatic random wave of energy $T$ is defined  as the Gaussian ensemble of functions on $M$ given by 
$$f(x)=\sum_{T-\eta(T) \leq t_j \leq T} c_j \, \varphi_j(x).$$
Here, the functions $\varphi_j$ form an orthonormal basis of $L^2(M,g, \R)$  and are eigenfunctions of the Laplacian 
$\Delta_g \varphi_j +t_j^2 \varphi_j=0$.  The coefficients $c_j$ are independent Gaussian random variables of mean $0$ and variance $1$. Also, $\eta(T)=o(T)$ and $\eta(T)\to +\infty$ as $T\to +\infty$.
 
The measure $\mu_{\text{mono}}$ is the universal distribution for the topologies of the zero set of a typical monochromatic wave. Our aim in this note is to identify the support of this measure.
\begin{theorem} \label{T: support for mu in zero sets}
The support of $\mu_{\text{mono}}$ is equal to $H(n-1)$. That is, for all $c \in H(n-1)$,
$$\mu_{\text{mono}}(c)>0.$$
\end{theorem}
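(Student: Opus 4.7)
My plan is to reduce the theorem to a local assertion about the translation-invariant scaling limit of the monochromatic ensemble, realize every $c\in H(n-1)$ as a structurally stable component of the nodal set of a Helmholtz solution on $\R^n$, and conclude via the full support of the limiting Gaussian measure. The Nazarov--Sodin integral geometry underlying \eqref{E: prob discrepancy} identifies $\mu_{\text{mono}}$ with the local statistics of the translation-invariant, isotropic Gaussian field $F$ on $\R^n$ whose spectral measure is the normalized surface measure on $S^{n-1}$; almost every realization of $F$ satisfies $(\Delta+1)F=0$. The argument of \cite{SW} shows that $\mu_{\text{mono}}(c)$ is proportional to the expected number, per unit volume, of connected components of $F^{-1}(0)$ of type $c$. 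Hence it suffices to exhibit one ball $B_R\subset\R^n$ for which
$$\mathbb P\bigl\{F^{-1}(0)\cap B_R\text{ has a connected component diffeomorphic to }c\bigr\}>0.$$

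Given $c\in H(n-1)$, embed a representative of $c$ in $\R^n$; since $c$ has codimension one and $\R^n$ has trivial $(n-1)$st homology, $c$ bounds a compact domain. I would then produce a solution $h$ of $(\Delta+1)h=0$ on $\R^n$ whose nodal set has a connected component $\Sigma\subset B_R$ that is diffeomorphic to $c$ and \emph{structurally stable}: there exists $\delta>0$ such that every $\tilde h$ with $\|\tilde h-h\|_{C^1(B_R)}<\delta$ has a component of $\tilde h^{-1}(0)\cap B_R$ of type $c$. Such an $h$ can be constructed by an Enciso--Peralta-Salas type flexibility argument for nodal sets of Laplace eigenfunctions: start from a smooth function $g$ with compact support in a thin tube around an embedding of $c$, chosen so that $c$ is a transverse (hence stable) component of $g^{-1}(0)$, and deform $g$ into a Helmholtz solution by a quantitative Runge/Lax--Malgrange-type approximation adapted to the Helmholtz operator, keeping $C^1$ control on the tube so that the relevant nodal component persists.

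The reproducing kernel Hilbert space of $F|_{B_R}$ is the space of Herglotz wave functions $x\mapsto\int_{S^{n-1}}\hat a(\xi)\,e^{i\langle x,\xi\rangle}\,d\sigma(\xi)$ with $\hat a\in L^2(S^{n-1})$, which is dense in the space of all Helmholtz solutions on $B_R$ in the $C^1$ topology. The standard description of the topological support of a Gaussian measure then yields $\mathbb P(\|F-h\|_{C^1(B_R)}<\delta)>0$, and combined with structural stability this gives the required positivity. The main obstacle is the realization step: constructing the Helmholtz solution $h$ whose nodal set contains a structurally stable component of the prescribed diffeomorphism type $c$, uniformly over all $c\in H(n-1)$. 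The reduction and the full support statement are, in contrast, standard ingredients of the Nazarov--Sodin framework.
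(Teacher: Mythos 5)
Your high-level plan agrees with the paper's: reduce to exhibiting a global solution $h$ of $(\Delta+1)h=0$ whose nodal set contains a structurally stable component diffeomorphic to $c$, then conclude by the positive-probability criterion from \cite{SW} (your RKHS/Herglotz-density observation is precisely Proposition~\ref{E: approximation}'s content for $T_1$). But there is a genuine gap in your realization step, which you yourself flag as ``the main obstacle.'' You propose to start from a \emph{smooth, compactly supported} function $g$ vanishing transversally on a copy of $c$ and then ``deform $g$ into a Helmholtz solution by a quantitative Runge/Lax--Malgrange-type approximation.'' That cannot work as stated: Lax--Malgrange (and any Runge-type theorem for elliptic operators) approximates \emph{local solutions} of $(\Delta+1)u=0$ by \emph{global} ones; it does not approximate arbitrary smooth functions by solutions. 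Indeed, on any ball the space of Helmholtz solutions is a very thin subspace of $C^\infty$ (they are real-analytic, obey mean-value identities, etc.), and a compactly supported bump-type $g$ is not $C^1$-close to any such solution. So you have no way to produce the local solution that Lax--Malgrange needs as input.

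The paper's key idea, which your proposal is missing, is how to manufacture that local solution. Given $c\in H(n-1)$, first upgrade a smooth representative to a real-analytic embedded copy via Whitney's approximation theorem together with Thom's isotopy theorem. By Jordan--Brouwer, $c$ bounds a compact domain $A_c$; take the \emph{first Dirichlet eigenfunction} $h_\lambda$ of $A_c$ with eigenvalue $\lambda^2$, and rescale $h(x):=h_\lambda(x/\lambda)$ so that $(\Delta+1)h=0$ on $\lambda A_c$ with $h=0$ exactly on $\lambda c$. Because the boundary is now real-analytic, $h$ extends as a Helmholtz solution to an open neighborhood $B_c\supset\overline{\lambda A_c}$, and because it is the \emph{first} eigenfunction, $\nabla h$ is nonvanishing on a tubular neighborhood of $\lambda c$ (\cite{BHM}), giving the structural stability you need. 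Only now does Lax--Malgrange enter: $\R^n\setminus B_c$ has no compact components, so $h$ can be $C^1$-approximated on $B_c$ by a global $f\in E_1$, and Thom's isotopy theorem transfers the nodal component. Your reduction and full-support steps are fine and match the paper; what you need to fix is this construction of the local model solution, which is where the Enciso--Peralta-Salas circle of ideas actually does the work.
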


As shown in \cite{SW}, the following criterion, extending the condition $(\rho \, 4)$ of Sodin \cite[(1.2.2)]{Sod}, suffices to establish the above theorem:
Given $c \in H(n-1)$ find a trigonometric polynomial $f$ on $\R^n$ of the form
\begin{equation}\label{E: f}
f(x)=\sum_{\xi \in S^{n-1}} a_\xi \, e^{i \langle x, \xi \rangle},
\end{equation}
such that $f^{-1}(0)$ contains $c$ as one of its components. The sum in \eqref{E: f} is over a finite set of $\xi$'s, and the coefficients should satisfy $a_\xi=\overline{a_{-\xi}}$.\\

We note that for $n=2$, $H(1)$ is a point and so the statement of Theorem \ref{T: support for mu in zero sets} is trivial. For $n=3$, $H(2)=\{0,1,2,\dots\}$ with each $c \in H(2)$ being identified with its genus $g(c)$, and in this case  \eqref{E: f} is verified in \cite{SW} by deforming a carefully constructed $f$. For $n \geq 4$ the sets $H(n-1)$ are not known explicitly and we proceed here by more abstract and general arguments.

 First, we give a number of criteria which are equivalent to \eqref{E: f} and which are closely connected to the underlying translation invariant Gaussian field on $\R^n$.
The functions \eqref{E: f} satisfy 
\begin{equation}\label{E: eigenvalue 1}
(\Delta +1) f(x)=0 \qquad \text{on} \; \;\R^n,
\end{equation} 
and the various criteria reflect this (these f's being monochromatic!). 
We then apply some differential topology and Whitney's approximation Theorem to realize $c$ as an embedded real analytic submanifold of $\R^n$. Then, following some of the techniques in \cite{EP} we find suitable approximations of $f$ which satisfy \eqref{E: eigenvalue 1} and whose zero set contains a diffeomorphic copy of $c$.  The construction of $f$ hinges on the Lax-Malgrange Theorem and Thom's isotopy Theorem.

\subsection{Acknowledgement} We thank D.~Jakobson for pointing out to us the paper \cite{EP} by Enciso and Peralta-Salas, and I. Wigman for his valuable comments.

\section{Proof of Theorem \ref{T: support for mu in zero sets}}
Our interest is in the monochromatic Gaussian field on $\R^n$ which is a special case of the band limited Gaussian fields considered in \cite{SW}, and which is fundamental in the proof of \eqref{E: prob discrepancy} above. For $0 \leq \alpha \leq 1$, define the annulus $A_\alpha=\{ \xi \in \R^n: \; \alpha \leq |\xi| \leq 1\}$ and let $\sigma_\alpha$ be the Haar measure on $A_\alpha$ normalized so that $\sigma_\alpha(A_\alpha)=1$. Using that the transformation $\xi \mapsto -\xi$ preserves $A_\alpha$  we choose a real valued orthonormal basis $\{\phi_j\}_{j=1}^\infty$ of $L^2( A_\alpha, \sigma_\alpha)$ satisfying 
\begin{equation}\label{E: -xi}
\phi_j(-\xi)= (-1)^{\eta_j} \phi_j(\xi), \qquad \quad \eta_j \in \{0,1\}.
\end{equation}
The band limited Gaussian field $H_{n,\alpha}$ is defined to be the random real valued functions $f$ on $\R^n$ given by 
\begin{equation}\label{E: f, b}
f(x)=\sum_{j=1}^\infty b_j\, i^{\eta_j}\, \widehat {\phi_j}(x)
\end{equation}
where
\begin{equation}\label{E: hat phi}
\widehat {\phi_j}(x)= \int_{\R^n} \phi_j(\xi) e^{- i \langle x, \xi\rangle} d \sigma_\alpha(\xi)
\end{equation}
and the $b_j$'s are identically distributed, independent, real valued, standard Gaussian variables. We note that the field $H_{n, \alpha}$ does not depend on the choice of the orthonormal basis $\{\phi_j\}$.

The distributional identity $\sum_{j=1}^\infty  \phi_j(\xi)  \phi_j(\eta)=\delta(\xi-\eta)$ on $A_\alpha$ together with \eqref{E: -xi} lead to the explicit expression for the covariance function:
 \begin{equation}\label{E: covariance}
 \text{Cov}(x,y)=\mathbb E_{H_{n,\alpha}}(f(x)f(y))=\int_{\R^n} e^{i \langle x-y , \xi \rangle} d\sigma_\alpha(\xi).
 \end{equation}
From \eqref{E: covariance}, or directly from \eqref{E: f, b}, it follows that almost all $f$'s in $H_{n,\alpha}$ are analytic in $x$ \cite{AT}. For the monochromatic case $\alpha=1$ we have 
\begin{equation}\label{E: Covariance}
 \text{Cov}(x,y)=\frac{1}{(2\pi)^{\frac{n}{2}}}\frac{J_{\nu}(|x-y|)}{ |x-y|^{\nu}}, 
 \end{equation}
 where to ease notation we have set 
  \[\nu:=\frac{n-2}{2}.\]
 
 In this case there is also a natural choice of a basis for $L^2(S^{n-1},d\sigma)=L^2(A_1, \mu_1)$ given by spherical harmonics.  Let $\{Y^\ell_m\}_{m=1}^{d_\ell}$ be a real valued basis for the space of spherical harmonics  $\mathcal E_\ell(S^{n-1})$ of eigenvalue $\ell(\ell+n-2)$, where $d_\ell=\dim \mathcal E_\ell(S^{n-1})$. 
We compute the Fourier transforms for the elements of this basis.
\begin{proposition} For every  $\ell \geq 0$ and $m=1, \dots, d_\ell$, we have 
\begin{equation}\label{E: f.t.  harmonic}
 \widehat{Y^\ell_m}(x)=(2\pi)^{\frac{n}{2}}\,i^\ell\, Y^\ell_m \left(\frac{x}{|x|}\right) \frac{J_{\ell + \nu}(|x|)}{|x|^{\nu}}.
\end{equation}
\end{proposition}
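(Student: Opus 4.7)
The plan is to exploit the rotational symmetry of the Fourier transform in \eqref{E: hat phi} together with the fact that $\widehat{Y^\ell_m}$ is annihilated by $\Delta+1$.

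First, I observe that \eqref{E: hat phi} defines a linear map $Y \mapsto \widehat Y$ that commutes with the natural action of $SO(n)$ (acting simultaneously on $\xi$ and $x$). Since $\mathcal{E}_\ell(S^{n-1})$ is an irreducible representation of $SO(n)$, Schur's lemma forces
\[
\widehat{Y^\ell_m}(x) = c_\ell(|x|)\, Y^\ell_m(x/|x|)
\]
for a radial factor $c_\ell$ that depends only on $\ell$ (and not on $m$ or on the choice of basis).

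Next, I would differentiate under the integral sign. Because integration is supported on $|\xi|=1$ and $\Delta_x e^{-i\langle x,\xi\rangle} = -|\xi|^2 e^{-i\langle x,\xi\rangle}$, the function $\widehat{Y^\ell_m}$ satisfies $(\Delta+1)\widehat{Y^\ell_m} = 0$ on $\R^n$. Writing this Helmholtz equation in polar coordinates for $c_\ell(r)\,Y^\ell_m(\omega)$ and using that $Y^\ell_m$ is an eigenfunction of $\Delta_{S^{n-1}}$ with eigenvalue $-\ell(\ell+n-2)$ reduces it to
\[
c_\ell''(r) + \tfrac{n-1}{r}\, c_\ell'(r) + \Bigl(1-\tfrac{\ell(\ell+n-2)}{r^2}\Bigr) c_\ell(r) = 0.
\]
The substitution $c_\ell(r) = r^{-\nu} f(r)$ converts this into Bessel's equation of order $\ell+\nu$. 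Since $\widehat{Y^\ell_m}$ is entire on $\R^n$, only the first-kind solution is admissible, giving $c_\ell(r) = K_\ell\, J_{\ell+\nu}(r)/r^{\nu}$ for some constant $K_\ell$.

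Finally, I would pin down $K_\ell=(2\pi)^{n/2} i^\ell$ by matching leading asymptotics at $r=0$. The small-argument expansion $J_{\ell+\nu}(r)/r^\nu \sim r^\ell/(2^{\ell+\nu}\Gamma(\ell+\nu+1))$ controls one side. Taylor-expanding $e^{-i\langle x,\xi\rangle}$ inside \eqref{E: hat phi} and using orthogonality of $Y^\ell_m$ to polynomials of degree less than $\ell$ shows the first nonvanishing term on the other side has order $|x|^\ell$, and is computable via the reproducing kernel of the space of degree-$\ell$ spherical harmonics on $S^{n-1}$. Equating the two coefficients identifies $K_\ell$. The main obstacle is precisely this constant-determination step: the Schur-plus-ODE portion is essentially automatic, while the reproducing-kernel calculation requires care with normalizations. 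Alternatively one can short-circuit the bookkeeping by invoking the classical Funk--Hecke formula, of which \eqref{E: f.t.  harmonic} is a direct special case.
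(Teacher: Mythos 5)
Your outline is correct and reaches the result by a genuinely different route than the paper. The paper proves the identity in one stroke via the theory of point-pair invariants (i.e.\ the Funk--Hecke mechanism): writing $\widehat{Y^\ell_m}(x) = \lambda_{h_x}(\ell)\,Y^\ell_m(x/|x|)$ with $h_x(t)=e^{-i|x|t}$, expressing the zonal harmonic through the Gegenbauer polynomial $C^\nu_\ell$, and then closing the computation by the tabulated Gegenbauer--Bessel integral \cite[(7.321)]{GR}. You instead derive the \emph{structure} of the answer in two conceptual steps --- Schur's lemma gives the factorization $c_\ell(|x|)\,Y^\ell_m(x/|x|)$, and the Helmholtz equation in polar coordinates forces $c_\ell$ to solve Bessel's equation of order $\ell+\nu$, with regularity at the origin selecting $J_{\ell+\nu}(r)/r^\nu$ up to a constant --- and only then match a single Taylor coefficient at $r=0$ to pin down $K_\ell$. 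What your route buys is that the Bessel profile is seen as inevitable (a consequence of equivariance plus $(\Delta+1)\widehat{Y^\ell_m}=0$), without needing to recognize a specific table integral; indeed, the radial ODE analysis you use reappears verbatim in the paper's proof of Proposition~\ref{E: approximation}. What the paper's route buys is that the constant comes out of the table formula for free, whereas in your approach the final normalization still requires computing $\int_{S^{n-1}}\langle\omega,\xi\rangle^\ell\,Y^\ell_m(\xi)\,d\sigma(\xi)$ via the zonal reproducing kernel, which is Funk--Hecke under another name --- you acknowledge this honestly and correctly flag it as the only non-automatic step. Two small points: in the Schur step you should invoke $O(n)$-equivariance rather than $SO(n)$ if you want the argument to be uniformly valid (for $n=2$ the degree-$\ell$ harmonics on $S^1$ are not $SO(2)$-irreducible over $\C$, so an $SO(2)$-equivariant operator need not be scalar); and the sign of the phase factor ($i^\ell$ versus $i^{-\ell}$) depends on the Fourier convention and is worth tracking carefully in the asymptotic matching, though it is immaterial for the application at hand since the ensemble is real.
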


\begin{proof}
We give a proof using the theory of point pair invariants \cite{Sel} which places such calculations in a general and conceptual setting. The sphere $S^{n-1}$ with its round metric is a rank $1$ symmetric space and $\langle \dot x, \dot y \rangle$ for $\dot x, \dot y \in S^{n-1}$ is a point pair invariant (here $\langle \, , \, \rangle$ is the standard inner product on $\R^n$ restricted to $S^{n-1}$). Hence, by the theory of these pairs we know that for every function $h:\R \to \C$ we have
\begin{equation}\label{E: int=lambda}
\int_{S^{n-1}} h(\langle \dot x ,\dot y \rangle ) \,Y (\dot y) \,d\sigma(\dot y)= \lambda_h (\ell) Y (\dot x), 
\end{equation}
where $Y$ is any spherical harmonic of degree $\ell$ and $\lambda_f(\ell)$ is the spherical transform. The latter can be computed explicitly using the zonal spherical function of degree $\ell$.
Fix any $\dot x \in S^{n-1}$  and  let $Z^\ell_{\dot x}$ be the unique spherical harmonic of degree $\ell$ which is rotationally invariant by motions of $S^{n-1}$ fixing $\dot x$ and so that $Z^\ell_{\dot x}(\dot x)=1$. Then, 
\begin{equation}\label{E: lambda}
\lambda_h(\ell)=\int_{S^{n-1}} h(\langle \dot x, \dot y \rangle) Z^\ell_{\dot x}(\dot y)\, d\sigma(\dot y).
\end{equation}
The function $Z^\ell_{\dot x}(\dot y)$ may be expressed in terms of the Gegenbauer polynomials  \cite[(8.930)]{GR} as
\begin{equation}\label{E: zonal, gegenb}
 Z^\ell_{\dot x}(\dot y)=\frac{C_\ell^\nu \big(\big \langle \dot x, \dot y \big \rangle \big) }{C_\ell^{\nu}(1) }.
\end{equation}
Now, for $x \in \R^n$,
\[ \widehat{Y^\ell_m}(x)=\int_{S^{n-1}} h_x\big( \big \langle \tfrac{x}{|x|}, \dot y  \big \rangle \big)\, Y^\ell_m(\dot y) d\sigma(\dot y),\]
where we have set $h_x(t)=e^{-i|x| t}$. Hence, by \eqref{E: int=lambda} we have
\[ \widehat{Y^\ell_m}(x)= \lambda_{h_x} (\ell) \, Y^\ell_m\big( \tfrac{x}{|x|} \big),\]
with 
\begin{equation}\label{E: computation of lambda}
\lambda_{h_x}(\ell)
= \int_{S^{n-1}} e^{-i |x| \big \langle  \tfrac{x}{|x|}\, ,\, \dot y \big \rangle }Z^\ell_{\dot x}(\dot y)\, d\sigma(\dot y)
= \frac{\text{vol}(S^{n-2})}{C_\ell^\nu(1) } \int_{-1}^1 e^{-it|x|} \, C^\nu_\ell(t) (1-t^2)^{\nu -\frac{1}{2}} \, dt.
\end{equation}
The last term in \eqref{E: computation of lambda} can be computed using   \cite[(7.321)]{GR}. This gives
\[\lambda_{h_x}(\ell)=(2\pi)^{\frac{n}{2}}\, i^\ell\, \frac{J_{\ell+\nu}(|x|)}{|x|^\nu},\]
as desired.

\end{proof} 

\begin{corollary}
The monochromatic Gaussian ensemble $H_{n,1}$ is given by random $f$'s  of the form
\begin{equation*}
{f}(x)=(2\pi)^{\frac{n}{2}} \sum_{\ell=0}^\infty \sum_{m=1}^{d_{\ell}} b_{\ell,m} \,  Y^\ell_m \left(\frac{x}{|x|}\right) \frac{J_{\ell + \nu}(|x|)}{|x|^{\nu}},
\end{equation*}
where the $b_{\ell, m}$'s are i.i.d  standard  Gaussian variables.
\end{corollary}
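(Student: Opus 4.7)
The plan is to specialize the definition \eqref{E: f, b} of $H_{n,1}$ to the basis of real-valued spherical harmonics and then invoke the preceding Proposition. Concretely, I would first choose the orthonormal basis $\{\phi_j\}$ of $L^2(A_1,\sigma_1) = L^2(S^{n-1},d\sigma)$ to be (a relabeling of) the collection $\{Y^\ell_m : \ell \geq 0,\ 1 \leq m \leq d_\ell\}$. Since each $Y^\ell_m$ is the restriction to $S^{n-1}$ of a homogeneous harmonic polynomial of degree $\ell$, it satisfies $Y^\ell_m(-\xi) = (-1)^\ell Y^\ell_m(\xi)$, which is precisely the symmetry requirement \eqref{E: -xi} with $\eta_{\ell,m} = \ell \bmod 2$.

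Next I would substitute the formula \eqref{E: f.t.  harmonic} for $\widehat{Y^\ell_m}$ into \eqref{E: f, b}, yielding
$$f(x) \;=\; (2\pi)^{n/2} \sum_{\ell=0}^{\infty} \sum_{m=1}^{d_\ell} b_{\ell,m}\, i^{\eta_{\ell,m} + \ell}\, Y^\ell_m\!\left(\frac{x}{|x|}\right) \frac{J_{\ell+\nu}(|x|)}{|x|^\nu}.$$
The key observation is that $i^{\eta_{\ell,m}+\ell}$ is always a real number equal to $\pm 1$: if $\ell$ is even then $\eta_{\ell,m}=0$ and $i^\ell = (-1)^{\ell/2}$, while if $\ell$ is odd then $\eta_{\ell,m}=1$ and $i^{\ell+1} = (-1)^{(\ell+1)/2}$. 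In particular, this factor $i^{\eta_{\ell,m}}$ in \eqref{E: f, b} is precisely what is needed to cancel the imaginary phase $i^\ell$ produced by the Fourier transform of an odd-parity spherical harmonic, and the resulting summand is real, as it must be.

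Finally, because each $b_{\ell,m}$ and $-b_{\ell,m}$ are identically distributed standard Gaussians and the $b_{\ell,m}$ are independent, one may absorb each sign $\pm 1$ into the corresponding coefficient without changing the law of the field; equivalently, one can invoke the already-noted fact that $H_{n,1}$ does not depend on the choice of orthonormal basis (both sides are centered Gaussian fields, and a direct comparison with \eqref{E: Covariance} shows that their covariance kernels agree). This yields exactly the stated expression. No substantive obstacle arises: the argument is a bookkeeping verification once the parity of spherical harmonics and the Proposition are in hand.
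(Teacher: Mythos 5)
Your proof is correct and follows the route the paper implicitly intends: the Corollary is stated without proof as an immediate consequence of the Proposition together with the definition \eqref{E: f, b}, and you supply exactly the needed bookkeeping (parity $\eta_{\ell,m}=\ell\bmod 2$, the observation that $i^{\eta_{\ell,m}+\ell}=\pm1$, and the sign-absorption via the symmetry of the Gaussian law). Nothing to add.
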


The functions $x \mapsto Y^\ell_m \left(\frac{x}{|x|}\right) \frac{J_{\ell + \nu}(|x|)}{|x|^{\nu}}$, $x \mapsto e^{i \langle x, \xi\rangle}$ with $|\xi|=1$, and those in \eqref{E: f, b} for which the series converges rapidly (eg. for almost all $f$ in $H_{n,1}$), all satisfy \eqref{E: eigenvalue 1}.
We therefore introduce the space 
\[E_1:=\{f : \R^n \to \R: \; f \in \text{Ker} (\Delta +1)\}.\]
In addition, consider the subspaces $P_1$ and $T_1$ of $E_1$ defined by
\[P_1:= \text{span}\left\{x \mapsto Y^\ell_m \left(\frac{x}{|x|}\right) \frac{J_{\ell + \nu}(|x|)}{|x|^{\nu}}:\; \, \ell \geq 0, \; m=1, \dots, d_\ell \right\},\] 
\[T_1:= \text{span}\left\{x\mapsto\frac{e^{i \langle x, \xi\rangle} +e^{-i \langle x, \xi\rangle} }{2}\, ,\,x\mapsto \frac{e^{i \langle x, \xi\rangle} -e^{-i \langle x, \xi\rangle} }{2i}:\;\; |\xi|=1 \right\} .\] 

\begin{proposition}\label{E: approximation}
Let  $f \in E_1$ and let $K \subset \R^n$ be a compact set. Then, for any $t \geq 0$ and $\ep>0$ there are $g \in P_1$ and $h \in T_1$ such that 
\[\|f-g\|_{C^t(K)}<\ep \qquad \text{and}\qquad \|f-h\|_{C^t(K)}<\ep.\]
That is, we can approximate $f$ on compact subsets in the $C^t$-topology  by elements of $P_1$ and $T_1$ respectively. 
\end{proposition}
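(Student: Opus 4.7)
The plan is to first show $P_1$ is dense in $E_1$ via the separation-of-variables expansion in spherical coordinates, and then to deduce density of $T_1$ by approximating each generator of $P_1$ by Riemann sums of plane waves over $S^{n-1}$.

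For $P_1$: given $f \in E_1$ and a compact $K \subset \R^n$, fix $R$ with $K \subset \overline{B_R}$ and write $x=r\omega$. For each fixed $r\geq 0$ the smooth function $\omega \mapsto f(r\omega)$ admits the spherical-harmonic expansion $f(r\omega) = \sum_{\ell,m} c_{\ell,m}(r) Y^\ell_m(\omega)$. Since $Y^\ell_m$ is an eigenfunction of $\Delta_{S^{n-1}}$ with eigenvalue $-\ell(\ell+n-2)$, separating $(\Delta+1)f=0$ in spherical coordinates forces each $c_{\ell,m}$ to satisfy the radial Bessel equation
$$c'' + \tfrac{n-1}{r} c' + \Bigl(1 - \tfrac{\ell(\ell+n-2)}{r^2}\Bigr) c = 0.$$
The only solution regular at $r=0$ is proportional to $J_{\ell+\nu}(r)/r^\nu$, so smoothness of $f$ across the origin forces $c_{\ell,m}(r) = a_{\ell,m}\, J_{\ell+\nu}(r)/r^\nu$, giving
$$f(x) = \sum_{\ell,m} a_{\ell,m}\, Y^\ell_m\bigl(\tfrac{x}{|x|}\bigr) \frac{J_{\ell+\nu}(|x|)}{|x|^\nu}.$$
The partial sums of this series lie in $P_1$. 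Standard Sobolev estimates on $S^{n-1}$ applied to the smooth function $f|_{\overline{B_R}}$ give that $|a_{\ell,m}|$ decays faster than any power of $\ell$; combined with uniform control on $r^{-\nu}J_{\ell+\nu}(r)$ and its $x$-derivatives over $[0,R]$, these partial sums converge to $f$ in $C^t(K)$.

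For $T_1$: it suffices to approximate each generator $g_{\ell,m}(x) := Y^\ell_m(x/|x|)\, J_{\ell+\nu}(|x|)/|x|^\nu$ of $P_1$ by elements of $T_1$. By the preceding Fourier identity, up to a real scalar
$$g_{\ell,m}(x) = i^{-\ell}\int_{S^{n-1}} Y^\ell_m(\xi)\, e^{-i\langle x,\xi\rangle}\, d\sigma(\xi),$$
and using the parity $Y^\ell_m(-\xi) = (-1)^\ell Y^\ell_m(\xi)$ together with the parities of $\cos$ and $\sin$, this reduces to a real integral over $S^{n-1}$ of $Y^\ell_m(\xi)\cos\langle x,\xi\rangle$ (when $\ell$ is even) or $Y^\ell_m(\xi)\sin\langle x,\xi\rangle$ (when $\ell$ is odd). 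Riemann sums for this integral over increasingly fine partitions of $S^{n-1}$ yield finite real linear combinations of $\cos\langle x,\xi_j\rangle$ and $\sin\langle x,\xi_j\rangle$ with $|\xi_j|=1$, i.e.\ elements of $T_1$. Joint smoothness of the integrand in $(x,\xi)$ and compactness of $K \times S^{n-1}$ ensure these Riemann sums converge to $g_{\ell,m}$ in $C^t(K)$; combined with Step 1 this gives density of $T_1$ in $E_1$.

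The main obstacle is the uniform $C^t(K)$-convergence in Step 1. The spherical-harmonic expansion converges in $C^\infty(S^{n-1})$ for each fixed $r$, but one must verify that the decay of $a_{\ell,m}$ dominates the polynomial-in-$\ell$ growth of the $C^t$-norms of $Y^\ell_m(\omega)\, J_{\ell+\nu}(r)/r^\nu$ on $\overline{B_R}$ uniformly in $r \in [0,R]$. This is routine but technical, and is the place where the argument requires genuine estimates rather than formal manipulations.
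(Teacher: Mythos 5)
Your proposal is correct and follows essentially the same route as the paper: separation of variables in polar coordinates and regularity at the origin to get the Fourier--Bessel expansion for $P_1$, followed by the Fourier integral identity for $Y^\ell_m(\tfrac{x}{|x|})J_{\ell+\nu}(|x|)/|x|^\nu$ and discretization of the spherical integral (the paper uses symmetric equidistributed points, you use Riemann sums over partitions, which is the same idea) for $T_1$. The only difference is emphasis: you flag the uniform $C^t(K)$-convergence of the partial sums as the technical crux, whereas the paper invokes analyticity of $f$ to assert rapid convergence of the expansion.
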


\begin{proof}
Let $f\in E_1$. Since $f$ is analytic we can expand it in a rapidly convergent series in the $Y^\ell_m$'s. That is, 
\[f(x)=\sum_{\ell=0}^\infty \sum_{m=1}^{d_\ell} a_{m,\ell}(|x|)Y^\ell_m( \tfrac{x}{|x|}).\]
Moreover, for $r>0$,
\begin{equation}\label{E: fourier a's}
\int_{S^{n-1}}|f(r \dot x)|^2\, d\sigma(\dot x)= \sum_{\ell=0}^\infty \sum_{m=1}^{d_\ell} |a_{m,\ell}(r)|^2.
\end{equation}
In polar coordinates, $(r, \theta) \in (0, +\infty)\times S^{n-1}$,   the Laplace operator in $\R^n$  is given by  
\[\Delta= \partial_r^2 + \frac{n-1}{r} \partial_r  +\frac{1}{r^2} \Delta_{S^{n-1}},\] 
and hence for each $\ell,m$ we have that 
 \begin{equation}\label{E: a}
 r^2a_{m,\ell}''(r) + (n-1) r a_{m,\ell}'(r) +(r^2- \ell(\ell+n-2))a_{m,\ell}(r)=0.
 \end{equation}
where $\ell$ is some positive integer.
There are two linearly independent solutions to \eqref{E: a}. One is $r^{-\nu}J_{\ell+\nu}(r)$ and the other blows up as $r \to 0$. Since the left hand side of \eqref{E: fourier a's} is finite as $r\to 0$, it follows that the $ a_{m,\ell}$'s cannot pick up any component of the blowing up solution. That is, for $r \geq 0$
\[ a_{m,\ell}(r)= c_{\ell,m}  \frac{J_{\ell +\nu}(r)}{r^\nu},\]
for some $c_{m,\ell} \in \R$.
Hence, 
\begin{equation}\label{E:  expansion for f}
{f}(x)=\sum_{\ell=0}^\infty \sum_{m=1}^{d_{\ell}} c_{\ell,m} \; Y^\ell_m \left(\frac{x}{|x|}\right) \frac{J_{\ell + \nu}(|x|)}{|x|^{\nu}}.
\end{equation}
Furthermore, this series converges absolutely and uniformly on compact subsets, as also do its derivatives. Thus, $f$ can be approximated by members of $P_1$ as claimed, by simply truncating the series in \eqref{E:  expansion for f}. 

To deduce the same for $T_1$ it suffices to approximate each fixed $Y^\ell_m \left(\frac{x}{|x|}\right) \frac{J_{\ell + \nu}(|x|)}{|x|^{\nu}}$. To this end let $\xi_1, -\xi_1, \xi_2, -\xi_2,\dots ,\xi_N, -\xi_N$ be a sequence of points in $S^{n-1}$ which become equidistributed with respect to $d\sigma$  as $N \to \infty$.
Then, as $N\to \infty$,
\begin{equation}\label{E: limit}
 \frac{1}{2N} \sum_{j=1}^N \left( e^{-i \langle x, \xi_j \rangle} Y^\ell_m(\xi_j)+ (-1)^\ell e^{i \langle x, \xi_j \rangle} Y^\ell_m(\xi_j)\right) \longrightarrow \int_{S^{n-1}} e^{-i \langle x, \xi \rangle} Y^\ell_m(\xi) \,d\sigma(\xi).
 \end{equation}
The proof follows since $(2\pi)^{\frac{n}{2}}\, i^\ell\, Y^\ell_m \left(\frac{x}{|x|}\right) \frac{J_{\ell + \nu}(|x|)}{|x|^{\nu}}= \int_{S^{n-1}} e^{-i \langle x, \xi \rangle} Y^\ell_m(\xi) \,d\sigma(\xi)$. Indeed, the convergence in \eqref{E: limit} is uniform over compact subsets in $x$. 
\end{proof}

\begin{remark}
For $\Omega \subset \R^n$ open, let $E_1(\Omega)$ denote the eigenfunctions on $\Omega$ satisfying $\Delta f(x)+f(x)=0$ for $x \in \Omega$. Any function $g$ on $\Omega$ which is a limit (uniform over compact subsets of $\Omega$) of members of $E_1$ must be in $E_1(\Omega)$. While the converse is not true in general, note that if $\Omega=B$ is a ball in $\R^n$, then the proof of Proposition \ref{E: approximation} shows that the uniform limits of members of $E_1$ (or $P_1$, or $T_1$) on compact subsets in $B$ is precisely $E_1(B)$. 
\end{remark}

With these equivalent means of approximating functions by suitable members of $H_{n, 1}$ we are ready to prove Theorem \ref{T: support for mu in zero sets}. To verify the criterion following Theorem \ref{T: support for mu in zero sets} we can use any   function in $E_1$.

\subsection{Proof of Theorem \ref{T: support for mu in zero sets}}

By the discussion  above it follows that given  a representative $c$ of a class $t(c) \in H(n-1)$,  it suffices to find $f \in E_1$  for which $C(f)$ contains a diffeomorphic copy of $c$. 

To begin the proof we claim that we may assume that $c$ is real analytic. Indeed, if we start with $\tilde{c}$ smooth, of the desired topological type, we may construct a tubular neighbourhood $V_{\tilde c}$ of $\tilde {c}$  and a smooth  function 
$$H_{\tilde c}:V_{\tilde c} \to \R \qquad \text{with}\qquad \tilde c= H_{\tilde c}^{-1}(0).$$ Note that without loss of generality we may assume that   $\inf_{x \in V_{\tilde c}}\|\nabla H_{\tilde c}(x)\|>0$.
Fix any $\epsilon>0$. We apply Thom's isotopy Theorem \cite[Thm 20.2]{AR} to obtain the existence of a constant $\delta_{\tilde c}>0$ 
so that for any function $F$ with $\|F-H_{\tilde c}\|_{C^1(V_{\tilde c})} <\delta_{\tilde c}$ there exists $\Psi_{F} :\R^n \to \R^n$ diffeomorphism with $\|\Psi_{F}-Id\|_{C^0(\R^n)} <\ep$, $\text{supp}(\Psi_{F}-Id)  \subset V_{\tilde c}$, and  
$$\Psi_{F}(\tilde c)= F^{-1}(0) \cap V_{\tilde c}.$$
To construct a suitable $F$ we use Whitney's approximation Theorem \cite[Lemma 6]{Wh} which yields the existence of a real analytic approximation $F:V_{\tilde c} \to \R^{m_{\tilde c}}$  of $H_{\tilde c}$ that satisfies $\|F-H_{\tilde c}\|_{C^1(V_{\tilde c})}< \delta_{\tilde c}$. It follows that $\tilde c$ is diffeomorphic to $c:=\Psi_{F}(\tilde c)$ and $c$ is real analytic as desired. 

By the Jordan-Brouwer Separation Theorem \cite{Li}, the hypersurface $c$ separates $\R^n$ into two connected components. We write $A_c$ for the corresponding bounded component of $\R^n \backslash c$.
Let $\lambda^2$ be the first Dirichlet eigenvalue for the domain $A_c$ and let $h_\lambda$ be the corresponding eigenfunction: 
$$\begin{cases}
(\Delta + \lambda^2 )h_\lambda(x)=0& x \in \overline{A_c},\\
 h_\lambda(x)=0& x \in c.
\end{cases}$$
Consider the rescaled function $$h(x):=h_\lambda(x/\lambda),$$ defined on the rescaled domain $\lambda A_c:=\{x \in \R^n:  x/\lambda \in A_c\}$. 
Since $(\Delta + 1)h=0$ in $\overline{\lambda A_c}$, and $\partial (\lambda A_c)$ is real analytic, $h$ may be extended to some open set $B_c\subset \R^n$ with $\overline{\lambda A_c} \subset B_c$ so that
$$\begin{cases}
(\Delta + 1)h(x)=0& x \in B_c,\\
 h(x)=0& x \in  \lambda c,
\end{cases}$$
where $\lambda c$ is the rescaled hypersurface $\lambda c:= \{x \in \R^n:\; x/\lambda \in c\}$.
Note that since $h_\lambda$ is the first Dirichlet eigenfunction, then we know that there exists a tubular neighbourhood $V_{\lambda c}$ of $\lambda c$ on which  $\inf_{x \in V_{\lambda c}}\|\nabla h(x)\|>0$ (see Lemma 3.1 in \cite{BHM}). Without loss of generality assume that  $V_{\lambda c}\subset B_c$.

Given any $\ep>0$ we apply Thom's isotopy Theorem \cite[Thm 20.2]{AR} to obtain the existence of a constant $\delta>0$ 
so that for any function $f$ with $\|f-h\|_{C^1(V_{\lambda c})} <\delta$ there exists $\Psi_f :\R^n \to \R^n$ diffeomorphism so that $\|\Psi_f-Id\|_{C^0(\R^n)} <\ep$, $\text{supp}(\Psi_f-Id)  \subset V_{\lambda c}$, and  
$$\Psi_f(\lambda c)= f^{-1}(0) \cap V_{\lambda c}.$$
Since $\R^n \backslash B_c$ has no compact components,  Lax-Malgrange's Theorem \cite[p. 549]{Kr} yields the existence of a global solution $f:\R^n \to \R$ to the elliptic equation $(\Delta + 1)f=0$ in $\R^n$ with $$\|f-h\|_{C^1(B_c)}<\delta.$$
We have then constructed a solution to $(\Delta + 1)f=0$ in $\R^n$, i.e. $f\in E_1$,  for which $f^{-1}(0)$ contains a diffeomorphic copy of $c$ (namely, $\Psi_f(\lambda c))$. This concludes the proof of the theorem.

\qed

 We note that the problem of finding a solution  to $(\Delta+1)f=0$ for which $C(f)$ contains a diffeomorphic copy of $c$ is related to the work \cite{EP} of A.~ Enciso and D.~Peralta-Salas. In \cite{EP} the authors seek to find solutions to the problem $(\Delta -q)f=0$ in $\R^n$ so that $C(f)$  contains a diffeomorphic copy of $c$, where $q$ is a {nonnegative}, real analytic, potential and $c$ is a (possibly infinite) collection of compact or unbounded ``tentacled" hypersurfaces. The construction of the solution $f$ that we presented is based on the ideas used in \cite{EP}.  Since our setting and goals are simpler than theirs, the construction  of $f$  is much shorter and straightforward.

\newpage

\end{document}